\def\BState{\State\hskip-\ALG@thistlm}
\begin{document}
\title{Equivalence Test in Multi-dimensional Space}
\subtitle{with Applications in A/B Testing}

\author{Jing Miao}
\affiliation{%
  \institution{Stanford University}
  \city{Stanford} 
  \state{California} 
  \postcode{94305}
}
\email{jingm@stanford.edu}

\author{Hongyuan Yuan}
\affiliation{%
  \institution{Adobe}
  \city{San Jose} 
  \state{California} 
  \postcode{43017-6221}
}
\email{hoyuan@adobe.com}

\author{Zhenyu Yan}
\affiliation{%
  \institution{Adobe}
  \city{San Jose} 
  \state{California}}
\email{wyan@adobe.com}


\begin{abstract}
In this paper, we provide a statistical testing framework to check whether a random sample splitting in a multi-dimensional space is carried out in a valid way, which could be directly applied to A/B testing and multivariate testing to ensure the online traffic split is truly random with respect to the covariates. We believe this is an important step of quality control that is missing in many real world online experiments. Here, we propose a randomized chi-square test method, compared with propensity score and distance components (DISCO) test methods, to test the hypothesis that the post-split categorical data sets have the same multi-dimensional distribution. The methods can be easily generalized to continuous data. We also propose a resampling procedure to adjust for multiplicity which in practice often has higher power than some existing method such as Holm's procedure. We try the three methods on both simulated and real data sets from Adobe Experience Cloud and show that each method has its own advantage while all of them establish promising power. To our knowledge, we are among the first ones to formulate the validity of A/B testing into a post-experiments statistical testing problem. Our methodology is non-parametric and requires minimum assumption on the data, so it can also have a wide range of application in other areas such as clinical trials, medicine, and recommendation system where random data splitting is needed.
\end{abstract}

%
%


\keywords{A/B testing, online traffic split, distribution test, multi/high-dimensional data, randomized Chi-square test, propensity score, DISCO}

\maketitle

\section{Introduction}
A/B testing (also known as split testing or bucket testing) is a method of comparing two versions of a webpage or app against each other to determine which one performs better. It is essentially a special case of multiple variants of a page which are shown to users at random, and statistical analysis is used to determine which variation performs better for a given key performance indicator (KPI). Sometimes, the results from A/B testing are used as a baseline to be compared with other digital marketing services, e.g. personalization. Thus, it is important that A/B testing or multi-variate testing results are truly valid so that we make the correct inference. One important aspect is to get rid of any confounding. In other words, we want to make sure that the randomly assigned groups have the same profile distributions. In this way any difference in the KPI actually results from the
testing objects, instead of from some characteristics of the users such as age or gender. Despite the large volume of literature on A/B testing design, to our knowledge, there has not been any papers which aims to validate the balance of population distribution of different groups. We want to point out that our distribution testing methods are non-parametric by design and can have much more general applications beyond A/B testing, such as in randomized medical clinical trials, in design of experiments in manufacturing fault detection, and in bandits problem in personalized medicine, basically anywhere when uniformly random data splitting is required in a multi-dimensional space.

\par The rest of this paper has the following structure: Section \ref{relatedwork} discusses past literature on theory and application of randomized controlled trials (RCT) and observational studies. Section \ref{methodology} introduces some statistical tools, i.e., distance-covariance analysis, propensity score method, randomized chi-square test, and a resampling technique, to test equivalence of multi-dimensional categorical distribution. Section \ref{simulations} compares and analyzes the above methods on simulated data. Section \ref{realdata} applies our methods on real traffic data of online marketing from Adobe Experience Cloud. Section \ref{discussion} discusses further generalization and implementations of our methodology.  

\section{Past Work}
\label{relatedwork}
Randomized controlled trials are the simplest method to yield unbiased estimates of the treatment effects under a minimal set of assumptions. Researchers have been developing new methodology that boosts the efficiency of randomized trials when sample size is small.  
Sample size may not be a problem in some e-commerce settings as the number of online visitors can easily build up to hundreds of thousands. However, in other settings such as clinical trials, the number of samples (patients) is often small \cite{bhat2017}. \cite{bhat2017} formulates online/offline A-B testing as a (computationally challenging) dynamic optimization problem and develops approximation and exact algorithms. In particular, the paper assumes that response is linear in the treatment and covariates: $y_k = x_k \theta + Z_k \kappa + \epsilon_k$, where $x_k=0$ or $1$ according to whether subject $k$ is assigned to treatment or control group, and $Z_k$ denotes covariates of subject $k$. The objective is to maximize overall precision, which is the inverse of the standard error of $\hat{\theta}$. In online setting, $x_k$ is $\mathcal{F}_{k-1}$ measurable. It can also incorporates other goals such as controlled selection bias and endogenous stopping. Despite clean theoretical guarantees and relatively easy implementation of this algorithm, the linear model is not applicable in many real cases, for example, if $y_k$ is 0 or 1 denoting whether a user convert or not. In this case, a generalized linear model makes more sense but we are unclear about any theoretical results in this scenario. Also the theory requires $Z_k$ to have elliptical distribution, which is not the case in many real applications, though one real experiment with categorical data in the paper still performs well. More importantly, the number of covariates may not be clear at the beginning of A/B testing as online user profile may not be available at the time of experiments. Since the current paper is concerned with experiments in e-commerce, the sample size does not pause a big issue.

\par There are numerous works on the implementation, validity, and efficiency of A/B testing. \cite{kohavi2009} gives a detailed summary of the things data scientists need to pay attention to when carrying out A/B testing, and in general multivariate testing, such as determining the right sample size before testing and stopping the experiments when buggy or unintentionally poor results come up. When there are more than one variant to test, such as the choice of the picture or the color of the text on a website, online experimenters can either do multiple pairwise comparison of each individual variant, and apply methods of multiple testing adjustments, such as Bonferroni adjustment, if the number of variants is small, or do a multivariate test altogether. This paper mentions some possibilities that A/B testing split might not be truly random. One of the most common in practice is the effect of robots. Robots can introduce significant skew into estimates, enough to render the split invalid, and cause many metrics to be significant when they should not have been. For some websites robots are thought to provide up to half the pageviews on the site \cite{kohavi2004}. However, it is difficult to clearly delineate between the human users and robots \cite{tan2002}. Despite mentioning possible confounding factors in online experiments, this paper does not go into details on how to get rid of them, and how to test whether the splitting of traffic is truly random.

\par When randomized trials are not possible, techniques in observational studies are needed to get valid inference. \cite{gordon2016} focuses on the evidence of advertising measurements from big field experiments in Facebook. It uses the advertising effects measured from RCTs as benchmark, and compares the causal effects measured from observational studies to the benchmark. Two methods proposed by the paper to get rid of confounding in non randomized experiments are matching and regression adjustments. Exact matching and propensity matching are two common ways in practice for the former. Inverse-probability-weighed regression adjustment (IPWRA) \cite{wooldridge2007} incorporates propensity information into regression adjustments to provide more robust results. This paper also mentions checking the equivalence of distribution in A/B split so as to create valid benchmark. Yet it simply checks each covariate independently without adjusting for multiplicity. This can incur false positives when the number of covariates is large and cannot detect difference in covariance structure. A/B testing is a design experiment so we do not need to use the methodology of observational studies in our paper. 

\par There are also papers on the continuous monitoring of A/B testing and how to effectively run a large number of A/B testing. If we plot online experiments results continuously and declare significance the first time p-value is smaller than a pre-defined threshold, we will find that the actual type I error can be much larger than the threshold. \cite{johari2015} introduces methods to continuously monitoring A/B while still controlling type I error or the false discovery rate. \cite{tang2010} describes overlapping experiments infrastructure that helps run experiments faster and produce better decisions. \cite{zhang2017} describes an adaptive sample size modification for cold start of A/B testing. Network effect can also come into play, especially for experiments conducted in companies like Facebook or LinkedIn. \cite{gui2015} proposes an A/B testing effect analysis by taking into account network effect into a linear model. 

\par All these works contribute to the validity or efficiency of online experiments. We emphasize that our "randomness check" has a very general application, regardless of how traffic is split, the distribution of the covariates, or whether the experiments are done offline or monitored continuously. 

\section{Methodology}
\label{methodology}
\par To formulate the problem mathematically, assume the distributor
has assigned incoming users to $k$ groups. Each user is represented
by a row vector of length $m$, with each entry representing certain
profile information, such as age, area, gender, or number of times of past visit. Hence, we have $k$
data sets, each of dimension $n_i \times m, 1 \leq i \leq k$. Our goal
is to infer whether the split achieves desired randomness by testing whether the $k$ $m$-dimensional data have the same
distribution. Throughout our analysis, we assume each column is
categorical and each row (user) is independent. These are reasonable
assumptions because many important profile informations, such as
gender, age, region, are categorical. We assume the number of user in each group is large, and the network effect is negligible  compared to the scale of user numbers.

\par In the following subsections, we first state the method that is used currently by online experimenters to do the A/B split validity check \cite{gordon2016} as baseline. Then we describe a method proposed by \cite{rizzo2010}, called distance components (DISCO), of measuring
the total dispersion of the samples, which admits a partition of the
total dispersion into components analogous to the variance components in
ANOVA. We will also apply propensity method and introduce a randomized chi-square test. Each of these distribution test method has its own advantage as we show in Section \ref{simulations}. Finally, we propose a resampling technique that controls family-wise error rate (FWER) under any condition while still maintaining high power compared to some other multiplicity adjustment methods, as again shown in Section \ref{simulations}. Our contribution is to apply and modify existing testing methods to the ubiquitous A/B-multi testing validity check. 

\subsection{Baseline}
\cite{gordon2016} and some other papers apply F-test \footnote{see
  \url{https://en.wikipedia.org/wiki/F-test} for an introduction of
  F-test.} to each of the $m$ column (covariate) or some key columns of the split data sets, and claim the split is truly random if none of the p-value is smaller than 0.05. This is perhaps the most straightforward and simple way of approaching the randomization check problem, but with some potential issues. First, F-test can only test difference in mean, but cannot detect other distributional differences, such as variance (we will see in the next subsection DISCO provides a solution to this). Second, even if each column has the same distribution among the $k$ data sets, the interaction among the $m$ columns can be different for each data set, which also contribution to multi-dimensional distribution heterogeneity. Third, multiplicity adjustment is needed if the dimension $m$ is large, which is often the case for online marketing data. Otherwise, even an A/A test can have very high false positive rate.

\par To address the third issue, we apply a resampling technique (introduced later) to adjust for multiplicity. It has the advantage of controlling FWER under any dependence structure of p-values, while maintaining high power compared to other p-value adjustment methods, such as Bonferroni or Holmes method.

\subsection{DISCO one-step test}
\cite{rizzo2010} propose a new method, called distance components (DISCO), of measuring the total dispersion of the samples in multi-dimensions, which admits a partition of the
total dispersion into components analogous to the variance components in
ANOVA. They introduce a measure of
dispersion based on Euclidean distances between all pairs of sample elements,
for any power $\alpha$, which is called the index, of distances such that $\alpha \in (0,2)$. The method is based on the following key definitions and theorem.

\par Suppose that $X$ and $X'$ are independent and identically distributed (i.i.d.),
and $Y$ and $Y'$ are i.i.d., independent of $X$. If $\alpha$ is a constant such that $E||X||^\alpha < \infty$ and $E||Y||^\alpha < \infty$, define the $\mathcal{E}_\alpha$-distance (energy distance) between the distributions of $X$ and $Y$ as
\begin{equation}
\label{energy}
\mathcal{E}_\alpha(X, Y) = 2E||X - Y||^\alpha - E||X - X'||^\alpha - E||Y - Y'||^\alpha.
\end{equation}
Then we have the following theorem.
\begin{theorem}
\label{theorem1}
Suppose that $X, X' \in \mathbb{R}^p$ are i.i.d. with distribution $F$, $Y, Y' \in \mathbb{R}^p$ are i.i.d. with distribution $G$, and $Y$ is independent of $X$. If $0<\alpha \leq 2$ is a constant such tha $E||X||^\alpha < \infty$ and $E||X||^\alpha < \infty$, then the
following statements hold:\\
(i) $\mathcal{E}_\alpha(X, Y ) \geq 0.$ \\
(ii) If $0<\alpha \leq 2$, then $\mathcal{E}_\alpha(X, Y ) = 0$ if and only if $X \stackrel{\mathcal{D}}{=} Y$. \\
(iii) If $\alpha = 2$, then $\mathcal{E}_\alpha(X, Y ) = 0$ if and only if $E[X] = E[Y]$.
\end{theorem}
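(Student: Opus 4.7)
The plan is to establish a Fourier-analytic integral identity that rewrites $\mathcal{E}_\alpha(X,Y)$ as a manifestly nonnegative integral, then handle $\alpha=2$ by a separate direct computation. For $0<\alpha<2$ the key tool is the classical representation: there exists a constant $c_{p,\alpha}>0$ such that for every $x\in\mathbb{R}^p$,
\[
\|x\|^\alpha \;=\; c_{p,\alpha}\int_{\mathbb{R}^p}\frac{1-\cos\langle t,x\rangle}{\|t\|^{p+\alpha}}\,dt.
\]
I would first verify this (via a change of variables reducing to a one-dimensional integral of $(1-\cos u)/|u|^{1+\alpha}$) and then substitute $x=X-Y$, $x=X-X'$, $x=Y-Y'$ and take expectations, using Fubini, to obtain
\[
E\|X-Y\|^\alpha \;=\; c_{p,\alpha}\int_{\mathbb{R}^p}\frac{1-\operatorname{Re}\bigl(\varphi_X(t)\overline{\varphi_Y(t)}\bigr)}{\|t\|^{p+\alpha}}\,dt,
\]
together with the analogous identities for $E\|X-X'\|^\alpha$ and $E\|Y-Y'\|^\alpha$ involving $|\varphi_X|^2$ and $|\varphi_Y|^2$.

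Plugging these into the definition (\ref{energy}) of $\mathcal{E}_\alpha$, the three numerators combine (after expanding the real parts) to $|\varphi_X(t)-\varphi_Y(t)|^2$, giving
\[
\mathcal{E}_\alpha(X,Y) \;=\; c_{p,\alpha}\int_{\mathbb{R}^p}\frac{|\varphi_X(t)-\varphi_Y(t)|^2}{\|t\|^{p+\alpha}}\,dt.
\]
Claim (i) then follows immediately from nonnegativity of the integrand. For (ii), read with $0<\alpha<2$ (the case $\alpha=2$ is covered separately in (iii) and cannot give equivalence to equality in distribution), one direction is trivial; for the other, vanishing of the integral together with continuity of $\varphi_X-\varphi_Y$ on $\mathbb{R}^p\setminus\{0\}$ forces $\varphi_X\equiv\varphi_Y$, which by the uniqueness theorem for characteristic functions gives $X\stackrel{\mathcal{D}}{=}Y$. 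A brief integrability check is needed: near the origin $|\varphi_X(t)-\varphi_Y(t)|^2=O(\|t\|^2)$, and $\|t\|^{2-(p+\alpha)}$ is locally integrable precisely when $\alpha<2$, matching the claimed range; at infinity $|\varphi_X-\varphi_Y|\le 2$ and $\alpha>0$ suffices.

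For (iii), the $\alpha=2$ case is purely algebraic. Using $\|X-Y\|^2=\|X\|^2+\|Y\|^2-2\langle X,Y\rangle$ and independence,
\[
E\|X-Y\|^2 = E\|X\|^2 + E\|Y\|^2 - 2\langle E[X],E[Y]\rangle,
\]
\[
E\|X-X'\|^2 = 2E\|X\|^2 - 2\|E[X]\|^2, \qquad E\|Y-Y'\|^2 = 2E\|Y\|^2 - 2\|E[Y]\|^2.
\]
Substituting into (\ref{energy}), the $E\|X\|^2$ and $E\|Y\|^2$ contributions cancel and what remains is $\mathcal{E}_2(X,Y) = 2\|E[X]-E[Y]\|^2$, which is nonnegative and equals zero if and only if $E[X]=E[Y]$.

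The main obstacle is establishing the integral representation of $\|x\|^\alpha$ with the correct constant; this is the classical fact relating $\|\cdot\|^\alpha$ to the symmetric $\alpha$-stable characteristic exponent (or equivalently to the fractional Laplacian), and verifying it from scratch requires a careful change of variables plus a Fubini justification that uses the integrability window $0<\alpha<2$. Once this representation is in hand, parts (i) and (ii) are essentially bookkeeping, and the integrability threshold at $\alpha=2$ explains cleanly why a separate argument is needed in (iii).
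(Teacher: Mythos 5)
The paper gives no proof of its own---it defers entirely to \cite{rizzo2010}---and your argument is precisely the standard Sz\'ekely--Rizzo proof from that reference: the integral representation of $\|x\|^\alpha$ for $0<\alpha<2$, the resulting identity $\mathcal{E}_\alpha(X,Y)=c_{p,\alpha}\int_{\mathbb{R}^p}|\varphi_X(t)-\varphi_Y(t)|^2\,\|t\|^{-(p+\alpha)}\,dt$, uniqueness of characteristic functions for (ii), and the direct computation $\mathcal{E}_2(X,Y)=2\|E[X]-E[Y]\|^2$ for (iii), together with the correct observation that (ii) must be read with $0<\alpha<2$ (the paper's ``$0<\alpha\leq 2$'' is a typo). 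One cosmetic point: your local estimate $|\varphi_X(t)-\varphi_Y(t)|^2=O(\|t\|^2)$ near the origin requires second moments and can fail under the stated $\alpha$-moment hypothesis, but it is also unnecessary, since integrability of the combined integrand already follows from the finiteness of each of the three Tonelli integrals you used in the Fubini step.
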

The proof of Theorem \ref{theorem1} can be read in \cite{rizzo2010}. Based on this theorem, DISCO statistics, which can be thought of as a variation of the ANOVA statistics, can be developed. Define the empirical distance between distributions as follows. For two $p$-dimensional samples $A = \{a_1,..., a_{n_1}
\}$ and $B = \{b_1,..., b_{n_2}
\}$, the $d_\alpha$-distance between $A$ and $B$ is defined as
\begin{equation}
\label{distance}
d_\alpha(A,B) = \frac{n_1 n_2}{n_1+n_2} [2g_\alpha(A,B)-g_\alpha(A,A)-g_\alpha(B,B)],
\end{equation}
where
\begin{equation}
\label{gfunction}
g_\alpha(A,B) = \frac{1}{n_1 n_2} \sum \limits_{i=1}^{n_1} \sum \limits_{m=1}^{n_2} ||a_i-b_m||^\alpha.
\end{equation}
Note that $d_\alpha(A,A)$ is within-sample dispersion and $d_\alpha(A,B)$ is between-sample dispersion. Similar to ANOVA analysis, we can also write total dispersion as summation of between-sample and within-sample dispersion here. Let $A_1,...,A_K$ be $p$-dimensional samples with sizes $n_1,...,n_K$. The $K$-sample $d_\alpha$-distance statistic that takes the role of ANOVA sum of squares for treatments is the weighted sum of dispersion statistics:
\begin{equation}
\label{between}
S_\alpha(A_1,...,A_K) = \sum \limits_{1 \leq j<k \leq K} \bigg(\frac{n_j+n_k}{2N}\bigg) d_\alpha(A_j,A_k).
\end{equation}
Similarly, the total dispersion of the observed response is
\begin{equation}
\label{total}
T_\alpha(A_1,...,A_K) = \frac{N}{2} g_\alpha (A,A),
\end{equation}
where $A = \sum \limits_{i=1}^K A_i$ is the pooled sample, and the within-sample dispersion is
\begin{equation}
\label{within}
W_\alpha (A_1,...,A_K) = \sum \limits_{j=1}^K \frac{n_j}{2} g_\alpha(A_j,A_j).
\end{equation}
Note that we have $T_\alpha(A_1,...,A_K) = S_\alpha(A_1,...,A_K)+W_\alpha(A_1,...,A_K)$, and when $p=1, \alpha=2$, the decomposition $T_2=S_2+W_2$ is exactly the ANOVA decomposition of the
total squared error: $SS(\text{total}) = SST + SSE$. Hence, ANOVA is a special case of DISCO method.

\par Based on the decomposition, the final statistics for testing equality of distribution is
\begin{equation}
\label{final_stat}
D_{n,\alpha} = \frac{S_\alpha/(K-1)}{W_\alpha/(N_K)},
\end{equation}
with $0<\alpha<2$ (if $\alpha=2$, the above statistics can only test equality of mean). The distribution of $D_{n,\alpha}$ is complicated so \cite{rizzo2010} uses permutation, which is a simplified version of our resampling technique, to obtain rejection thresholds. As we will see in Section \ref{simulations}, DISCO one-step test, though simple to implement, does not provide information about which dimensions are problematic when the null hypothesis of equal distribution is rejected.

\subsection{Propensity score method}
Propensity score comparison \footnote{see
  \url{https://en.wikipedia.org/wiki/Propensity_score_matching}} tests
whether the $k$ $m$-dimensional data sets have the same distribution
by fitting a model to the data to obtain the likelihood of a data point being assigned to a particular data set. To be specific, combine the $K$ data sets $A_1,...,A_K$ to be one big data set $A$, which is also the $p$-dimensional covariate space. The response is a length-$n$ vector with each entry being the class label (1,2,...,or $K$) of each data point. We can then fit logistic regression, tree-based model, support vector machine, or any class prediction model to the data to obtain a predicted label for each data point. Here we use multiple logistic regression. If the $K$ original data sets truly have the same distribution, the predicted labels should have about uniform distribution on $\{1,2,...,K\}$. We use chi-square test \footnote{see
  \url{https://en.wikipedia.org/wiki/Chi-squared_test}} on a $K \times K$ contingency table to test the uniformity of distribution. However, we can show, for logistic regression, that only when the $K$ data sets are exactly the same is the distribution of the predicted labels truly uniform. Otherwise, the predicted label will tend to be the same as the actual label due to overfitting. To resolve this, we can randomly choose a $c$ proportion of rows in each data set to be training data, and the rest are test data. But this reduces the
effective test sample size in the stage of chi-square test. For example, if we choose $p=0.8$, then only 1/5 of the data are
used to chi-square test. This can shrink power when the total number of data points are not too large. Here, like in DISCO one-step test introduced in the previous subsection, we use permutation method to get the rejection threshold for the p-value obtained from the chi-square test.

\par The propensity score method, in some sense, takes into account both marginal distributions and
interactions because all the covariates appear together on the right hand side of the link function. Interaction terms can also be added, but it is unclear how many of the interaction terms are enough since adding all the interactions are quite unfeasible when the number of covariates is large. However, as we will show in the simulation section, the propensity score method is not very sensitive to non-mean difference (such as variance difference) in distribution. Next, we will give a more direct approach to testing marginal and
interaction homogeneity.

\subsection{Randomized chi-square test}
Since the data are categorical, the distribution is determined by the
multinomial distribution parameters $p=(p_1,...,p_l)$, where each $p_i$ denotes the probability of seeing a particular combination of categories from the $m$ columns, with $\sum
\limits_{i=1}^l p_i = 1$, where $l$ is the total number of
categories. With $m$ columns in total, the total number of categories
can be huge ($2^m$ is most likely a lower bound). Theoretically, we
can create a $K \times l$ contingency table and apply chi-square test
to test whether the multinomial distribution is independent of data
sets (equivalent to the $k$ data sets having the same distribution). Yet this is not feasible under computation time
constraints. Nor can we reduce the dimensions of the table without
losing information. For example, two distributions can have the same two-way
interaction but different three-way interaction. However, if we have
further information about the number of categories of each column, we
can reduce $l$ accordingly. A simple example is that all columns are
binary (two categories). The distribution in this case is determined
by marginal distribution and two-way interaction. Without further
information,  we compromise by
introducing randomization: 
choosing a small subset of columns each time, say $C$ columns, to do
chi-square test, and then repeat the process $D$ times. The argument
is that even if the
non-null columns are sparse, by repeatedly choosing $D$ times for
relatively large $D$, the non-null columns still have large
probability of being picked at least once. Assume there are $m$ columns in total, the probability that one single column is not picked even once is $(1-C/m)^D$, which is approximately $1-CD/m$ when $C/m$ is small. Thus, $C$ and $D$ have equal impacts on the selection probability. In practice, increasing $D$ costs less computation than increasing $C$.

\par Note that randomized chi-square only applies to categorical data among the three methods we introduce. In practice, we can ``categorize'' continuous data into buckets, which sometimes gives us higher heterogeneous distribution detection power than applying the other methods directly on the continuous data.

\subsection{A resampling technique}
Resampling \footnote{see
  \url{https://en.wikipedia.org/wiki/Resampling_(statistics)}}  is
often used when the distribution of the test statistics is not known
or hard to derive. In our randomized chi-square test, for example, the
distribution of P-values depends on the unknown distribution of the
original data sets. Thus, we can use resampling to set the threshold
of rejection. If we are in single hypothesis testing scenario (one-step DISCO and propensity score methods), our proposed resampling technique is equivalent to permutation test (see Algorithm 1) \footnote{In all the algorithms here, without loss of generality we assume the computed statistics are P-values.}.

\begin{algorithm} 
  \begin{algorithmic}[1]
    \State Get original statistics $P_0$ (e.g. from propensity score method).
    \State Randomly permute rows (users) among the $k$ data sets, and calculate the P-value from the permuted data.
    \State Repeat step 2 $B$ times, and we obtain $B$ P-values from resampling, denoted by a vector $\tilde{P}^* = (\tilde{P}^*_1,...,\tilde{P}^*_B)$.
    \State Choose the threshold $t$ as the $\alpha$(5) percentile of $\tilde{P}^*$, and reject the null hypothesis if $P_0<t$.
  \end{algorithmic} 
  \caption{resampling for single hypothesis}
  \label{alg:algorithm1}
\end{algorithm}
It is easy to see that Algorithm 1 controls FWER exactly, because all permutations have equal probability under the null. Under multiple hypotheses testing scenario, we modify Algorithm 1 a little (see Algorithm 2).

\begin{algorithm} 
  \begin{algorithmic}[1]
    \State Assume there are $m$ hypotheses. Get the original vector of statistics, denoted $\hat{P}_0=(P_{01},...,P_{0m})$.
    \State Randomly permute rows (users) among the $k$ data sets, and calculate the vector of P-values from the permuted data, denoted by $\tilde{P}^*_1 = (\tilde{P}^*_{11},...,\tilde{P}^*_{1m})$. Let $Pmin_1 = \min(\tilde{P}^*_1)$.
    \State Repeat step 2 $B$ times, and we obtain $B$ minimum P-values from permuted data, denoted by a vector $\tilde{P}min = (Pmin_1,...,Pmin_B)$. 
    \State Choose the threshold $t$ as the $\alpha$(5) percentile of $\tilde{P}min$, and reject the null hypothesis i if $P_{0i}<t$.
  \end{algorithmic} 
  \caption{resampling for multiple hypotheses}
  \label{alg:algorithm2}
\end{algorithm}
We claim that Algorithm 2 controls FWER under any hypotheses configuration.
\begin{theorem}
Assume P-values are marginally Uniform(0,1). Under any covariance structure, Algorithm 2 controls FWER, so it controls FDR as well.
\end{theorem}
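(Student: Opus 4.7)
The plan is to adapt the classical Westfall--Young min-P permutation argument. The key structural property is that Step 2 of Algorithm~\ref{alg:algorithm2} generates, by permuting the group labels on rows, a joint reference distribution for the whole P-value vector that is valid regardless of how the coordinates are correlated; that is the sense in which ``under any covariance structure'' is earned, and it is what Bonferroni or Holm cannot exploit.

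First I would handle the \emph{global null}, in which all $K$ datasets come from a common distribution. Then the rows of the pooled data are i.i.d., so permuting group labels leaves the joint law of $\hat{P}_0$ unchanged, and the $B+1$ quantities $\min_i P_{0i}, Pmin_1, \ldots, Pmin_B$ are exchangeable. A standard rank argument then shows that the empirical $\alpha$-quantile threshold $t$ satisfies $\Pr\{\min_i P_{0i} \le t\} \le \alpha$, up to an $O(1/B)$ discretization error; under the global null this equals the FWER.

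For the general \emph{partial-null} case, let $\mathcal{H}_0$ index the true nulls. I would combine two observations. First, the trivial bound $Pmin_b \le \min_{i \in \mathcal{H}_0} \tilde{P}^*_{b,i}$ shows that the data-driven threshold is never larger than the threshold one would obtain from the null sub-vector alone, so including the non-null coordinates only makes the procedure more conservative. Second, because each statistic used for $i \in \mathcal{H}_0$ depends only on columns whose distribution is identical across groups, row permutation preserves the joint null distribution of $(P_{0i})_{i \in \mathcal{H}_0}$. Applying the global-null exchangeability argument to this sub-vector then yields $\Pr(\exists\, i \in \mathcal{H}_0 : P_{0i} < t) \le \alpha$, which is FWER. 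FDR control is then automatic: $V/\max(R,1) \le \mathbf{1}\{V \ge 1\}$ gives $\mathrm{FDR} \le \mathrm{FWER} \le \alpha$.

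The main obstacle is the subset-pivotality step. The threshold inequality is immediate; what requires care is arguing that the permutation distribution of the null coordinates coincides with their true null joint distribution, independent of what the non-null columns look like. In the marginal F-test setting this is transparent because each test uses a single column, but in the randomized chi-square test a randomly drawn block of $C$ columns can mix null and non-null coordinates. I would resolve this by declaring any such mixed block ``non-null'', so that $\mathcal{H}_0$ collects only all-null blocks; subset pivotality for these blocks then follows from row exchangeability restricted to their null columns, and the rest of the argument goes through verbatim.
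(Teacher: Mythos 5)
Your argument is correct and follows essentially the same route as the paper's one-sentence proof: both rest on the observation that the minimum of the null-subset P-values is stochastically no smaller than the minimum over all $m$ coordinates, whose permutation distribution calibrates the threshold $t$. You simply supply the details the paper leaves implicit (exchangeability of the $B+1$ minima under the global null, the subset-pivotality step for the partial null, and the mixed-block caveat for the randomized chi-square test), so this is a more careful rendering of the same idea rather than a different approach.
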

\begin{proof}
The proof is conceptually simple. Assume there are $r$ non-null and $m-r$ null hypotheses among the $m$ hypotheses. Then
the probability of making at least one false rejection is the probability that the minimum of a subset of $m-r$ null P-values is less than the $\alpha$th percentile of the distribution of the minimum of total $m$ null P-values, which is less than $\alpha$.
\end{proof}

\par In practice, resampling can also be applied after hypotheses selection (see Algorithm 3).
\begin{algorithm} 
  \begin{algorithmic}[1]
    \State Assume there are $m$ hypotheses. A selection rule $S$ selects $s$ hypotheses with statistics $\hat{P}^S=(P_{m1},...,P_{ms})$.
    \State Randomly permute rows (users) among the $k$ data sets, and apply selection rule S, and denote selected P-values by $\tilde{P}^*_1 = (\tilde{P}^*_{11},...,\tilde{P}^*_{1s1})$. Let $Pmin_1 = \min(\tilde{P}^*_1)$.
    \State Repeat step 2 $B$ times, and we obtain $B$ minimum selected P-values from permuted data, denoted by a vector $\tilde{P}min = (Pmin_1,...,Pmin_B)$. 
    \State Choose the threshold $t$ as the $\alpha$(5) percentile of $\tilde{P}min$, and reject the null hypothesis i in the originally selected $s$ P-values if $P_{mi}<t$. 
  \end{algorithmic} 
  \caption{resampling after hypotheses selection}
  \label{alg:algorithm3}
\end{algorithm}
Empirical results show that Algorithm 3 also controls FWER, thus FDR, when the selection rule $S$ satisfies certain properties. That is, when non-null hypotheses are more likely to be selected than null hypotheses. Intuitively, this controls FWER because the minimum of a subset of a total $s$ null P-values is stochastically larger than the minimum of all the $s$ null P-values. 

\par We will compare this resampling technique to other multiplicity adjustment methods such as Holm's procedure \footnote{See \url{https://en.wikipedia.org/wiki/Holm0-Bonferroni_method} for Holm's procedure.}, and the Benjamini-Yekutieli (BY) procedure proposed in \cite{benjamini2001} in the next section. 
 
\section{Simulations and analysis}
\label{simulations}
Since real data sets are huge and messy, we first use simulations to compare
our methods introduced in the previous section. Heterogeneity in multi-dimensional distributions has two compositions: (a) heterogeneity in marginal distribution of one (or more) dimension(s); (b) heterogeneity in the covariance (interaction) structure among dimensions. In the following simulations, the data is 
generated with either (a) or (b), or both. 

\par For randomized chi-square method, deciding the choices of the maximum number of columns to sample each time ($C$) and the number of times to sample ($D$) is tricky. When dimensions become large, we hope to increase $C$ to capture more complicated interaction structure. Yet the number in each cell of the $R \times C$ table will decrease, which will diminish the power of chi-square test. Empirically, when the number of columns is not too large, and if the sum of each column of the $R \times C$ table is greater than or equal to 5, we pick $C$ between $1/10 m$ and $1/5 m$, where $m$ is the dimension of the data sets. 

\par We do not do any feature selection to reduce dimension in our simulations, because in reality online marketing data has a great range of variety depending on the company, and each company has its own way of trimming data sets (either with general feature selection technique such as LASSO, or with domain knowledge). Thus, we do not apply Algorithm 3 in the previous section in this paper.

\subsection{Detection of heterogeneity in marginal distribution or interaction structure}
In this first simulation, we consider detection of marginal distribution difference and interaction structure difference separately. We simulate four data sets, each with 10 columns (dimensions) and 100 rows. Three of them have iid entry with multinomial distribution with probability vector (0.25,0.25,0.25,0.25) (so the number of categories of each column is also four). 

\par In Scenario 1, the last data set has different marginal distribution than the other three data sets: for weak signal, the probability vector is (0.3,0.25,0.25,0.2); for medium signal, it is (0.4,0.25,0.2,0.15); for strong signal, it is (0.5,0.2,0.2,0.1). For each heterogeneity we also vary the number of columns (from 1 to 10) that have heterogeneous marginal distribution to test the sensitivity of each method. 

\par In Scenario 2, the last data set has different interaction structure among columns keeping the marginal distribution the same as the other three sets (the other three have independent columns): first sort each column, and then for strong signal, each column is rotated by 10 (turning an array $A$ to $A[10:] + A[:10]$. Here $+$ denotes concatenation) from the previous column, so the columns are positively correlated; for medium signal, after rotation 40\% of the data in each column is permuted to mitigate the correlation; for weak signal, after rotation 80\% of each column is permuted randomly. Like in Scenario 1, we also vary the number of columns (from 2 to 10) with heterogeneous interaction structure. We compare
four methods: baseline t-test, propensity score method with resampling, DISCO method, and randomized chi-square test with resampling. 

\begin{figure}[H]
\includegraphics[scale=0.5]{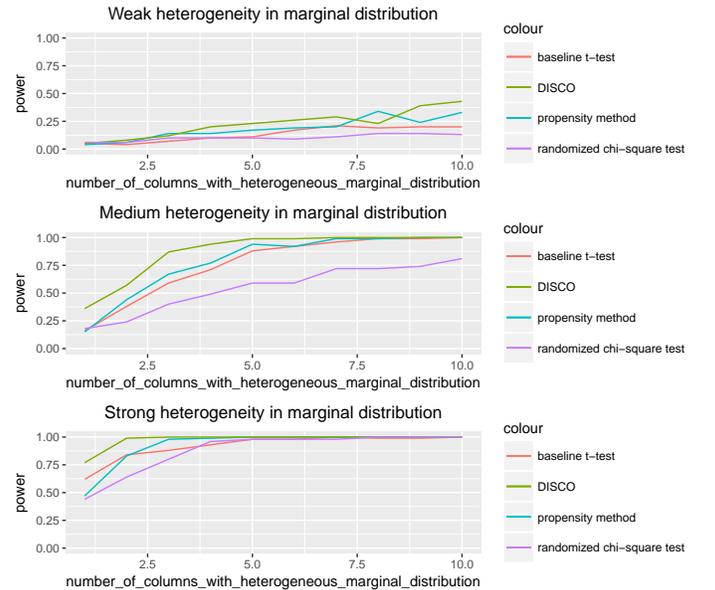}
\caption{All three plots are the results of detection of heterogeneity in marginal distribution, the top one for small difference, the middle one for medium difference, the bottom one for big difference}
\label{plot1}
\end{figure}
\begin{figure}[H]
\includegraphics[scale=0.5]{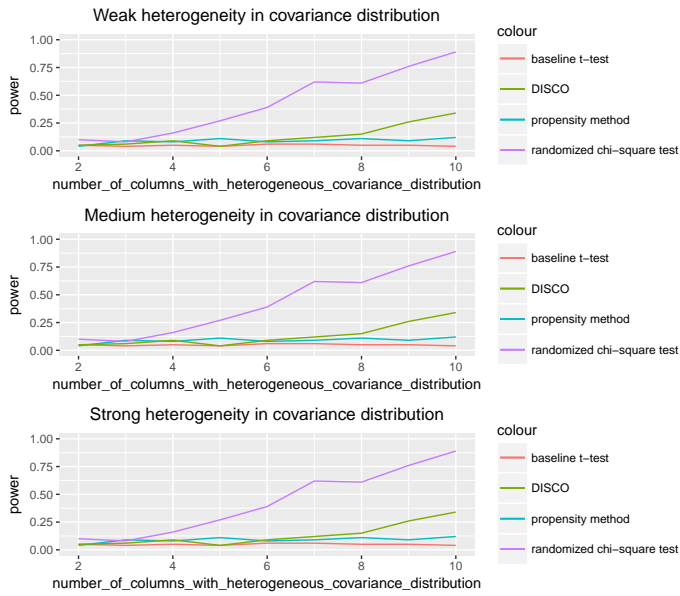}
\caption{All three plots are the results of detection of heterogeneity in interaction structure, the top one for weak correlation, the middle one for medium correlation, the bottom one for strong correlation}
\label{plot2}
\end{figure}

\par From Figure \ref{plot1} and \ref{plot2} we see that baseline t-test has high power for detecting marginal heterogeneity, but not for detecting interaction heterogeneity. This agrees with our conjecture. Thus, in the following simulations, we will not test this method anymore. DISCO and propensity score methods have relatively higher power in detecting marginal difference, while randomized chi-square method has significant advantage in detecting interaction heterogeneity. Furthermore, while DISCO or propensity score methods can only tell whether the multi-dimensional distributions are the same, randomized chi-square test can also flag individual problematic columns when the distributions are imbalanced. The flagged columns may not be exhaustive due to randomness in column selection, but it definitely provides a starting point for the follow-up detail diagnosis procedure.  

\par We also see from the power plots that interaction heterogeneity is in general harder to detect than marginal distribution heterogeneity since the former has convex power line (power only increase significantly when the number of heterogeneous columns is large) while the latter has concave power line.

\subsection{Varying dimension}
We next vary the dimension of data sets from 10 to 50 while holding the number of rows to be 100. We compare the three methods, propensity methods, DISCO, and randomized chi-square test, on weak heterogeneity (weak marginal heterogeneity + weak interaction heterogeneity as defined in the previous simulation), medium heterogeneity (weak marginal heterogeneity + medium interaction heterogeneity), and strong heterogeneity (medium marginal heterogeneity + medium interaction heterogeneity). In each of the three signal level the number of heterogeneous columns is 1/5 of the dimensions.  

\begin{figure}[H]
\includegraphics[scale=0.5]{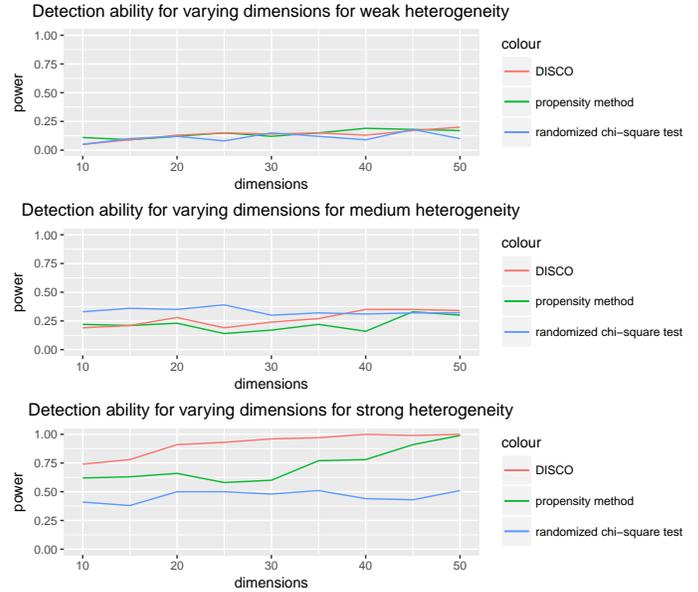}
\caption{All three plots are the results of detection power of heterogeneity as dimension increases, the top one for weak heterogeneity, the middle one for medium heterogeneity, the bottom one for strong heterogeneity}
\label{plot3}
\end{figure}

\par From Figure \ref{plot3} we see randomized chi-square behaves almost the same, or a little better, when the difference in distribution is not too large, while the other two methods, especially DISCO, behave significantly better when the dimension of data sets is relatively large. In other words, randomized chi-square test has higher power in detecting many small effects.

\par We also notice that compared to the other two methods, increasing the dimension of the data has little positive effect in power for randomized chi-square test. The reason is that since the portion of heterogeneous columns is unchanged as dimension increases, the probability of picking the "problematic columns" is also relatively unchanged as dimension increases. For the other two methods, the power has an increasing trend as dimension increases.

\par We use resamling procedure to get threshold of rejection in all the four methods, so here we also compare the power of resampling procedure, and one popular multiplicity adjustment procedure, the Holm's procedure \footnote{See \url{https://en.wikipedia.org/wiki/Holm\%E2\%80\%93Bonferroni_method} for an introduction of Holm's procedure.} Both resampling and Holm's procedure can control type I error under any P-value structure. 

\begin{figure}[H]
\includegraphics[scale=0.4]{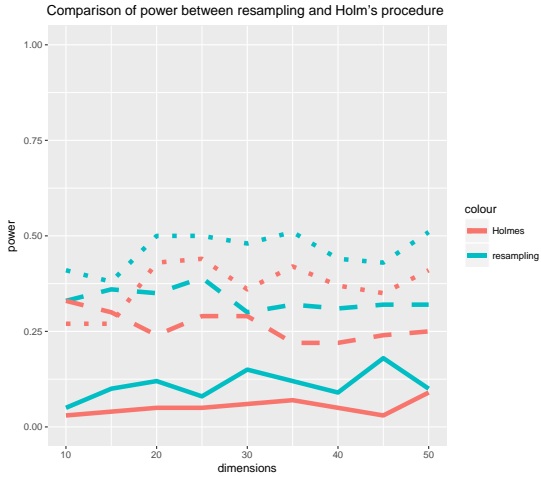}
\caption{Power comparison of resampling and Holm's procedure for randomized chi-square method, the top one for weak heterogeneity, the middle one for medium heterogeneity, the bottom one for strong heterogeneity}
\label{plot4}
\end{figure}

Figure \ref{plot4} plots the power comparison of applying randomized chi-square test + resampling and applying randomized chi-square test + Holm's procedure. We see that for any heterogeneity level, resampling has higher power than Holm's procedure, which justifies the use of resampling procedure in multiple hypotheses testing problems. 

\subsection{Simulated real-world scenario}
Next, we simulate a California-local real online marketing A/B testing scenario. Assume the company is an online retailer for fashion, skin care, and cosmetics. The data is collected from a 24-hour window from 12AM February 21st to 12AM February 22nd. There are 8 columns representing area of living, browser type, gender, age, employment status, income, accumulated number of visits (in the past year), and converted or not before respectively. Table 1 below summarizes the encoding details of these 8 features. 

\begin{table}[H]
\begin{center}
\begin{tabular}{ |c|c| } 
 \hline
 Column names & Encoding  \\ \hline
 Area of living & 0: south Cal; 1: north Cal; 2: mid Cal \\ \hline
 Browser type & 0: Chrome; 1: Safari; 2: Firefox; \\ 
 & 3: Internet Explorer; 4: others \\ \hline 
Gender &  0: male; 1: female \\ \hline 
Age & 0: <20; 1: 20-30; 2: 30-40;  \\ 
& 3: 40-50; 4: >50 \\ \hline
Employment status & 0: student; 1: employed; 2: unemployed \\ \hline
Income & 0: <50,000; 1: 50,000-100,000;  \\ & 2: 100,000-200,000; 3: >200,000 \\ \hline
Accumulated & 0: <3; 1: 3-10; 2: 10-20; 3: >20\\ 
number of visits &  \\ \hline
Converted before & 0: no; 1: yes\\ \hline
\end{tabular}
\label{table1}
\caption{Column names and encodings} 
\end{center}
\end{table}
The number of categories for each column ranges from 2 to 5. The real interaction structure among these 8 columns can be complicated. We simplify this relationship and summarize it in Figure 5 below. Area of living and Browser type are independent variables, while the other 6 variables have complicated correlations. It is by no means accurate. For example conditioning on employment status, age and gender may not be independent in reality. We consider the scenario where the data traffic is not randomly split for a period of time. For example, almost all traffic is assigned to set A for two hours in the morning, which results in set A has more data than set B. To balance the number of data points in the two sets, the experimenter assigns most of the traffic in the evening to set B. A direct result is that employment status has different distributions in set A and B --- unemployed people take up much higher proportion in the morning time when employed people and students are busy working than they do in the evening time.

\begin{figure}[H]
\centering
\begin{tikzpicture}[auto,node distance=1.5cm]
  %
  \node[entity] (node1) {Employment status}
  [grow=up,sibling distance=2cm];
    \node[relationship] (rel3) [above = of node1] {Income};
     \node[relationship] (rel2) [below left = of node1] {Gender};
  \node[relationship] (rel1) [below right = of node1] {Age};
  \node[entity] (node2) [above right = of rel1, xshift=-1.6cm]	{Accumulated number of visits};
  \node[relationship] (rel4) [above = of node2] {Converted or not};
  \node[entity] (node3) [above  = of rel3]{Area of living};
  \node[entity] (node4) [right = of node3]{Browser type};
  \path (rel1) edge node {1} (node1)
  edge	 node {4}	(node2)
  (rel2) edge node {2} (node1)
  edge	 node {4}	(node2)
  (rel3) edge node {3} (node1)
  edge	 node {4}	(node2)
  (rel4) edge node {5} (node2)
    (rel4) edge node {5} (rel3);
\end{tikzpicture}
\caption{Interaction structure among the 8 columns}
\label{diagram}
\end{figure}
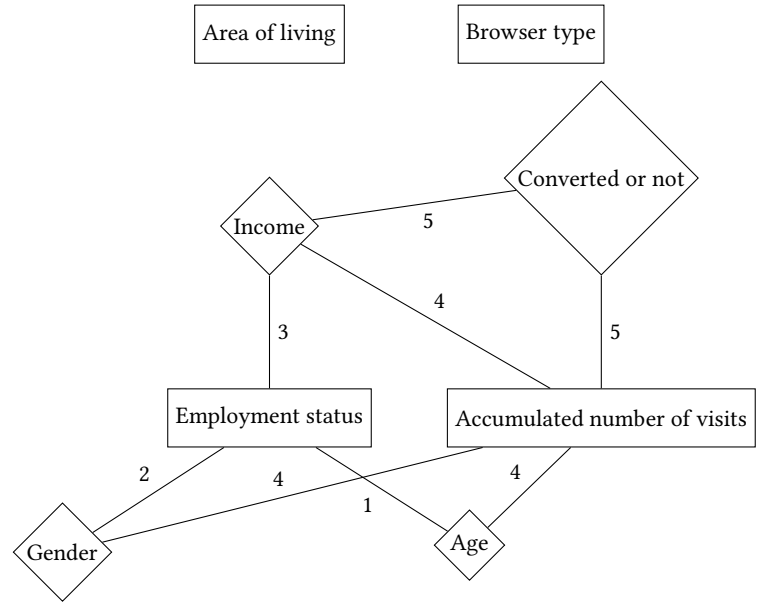

Figure \ref{diagram} also shows a way to generate our data. All marginal distributions are multinomial. Area of living and Browser type are independent which can be easily generated. Then we generate the Employment status with certain probability vector. Conditioning on Employment status, we generate age (the edge with 1), gender (the edge with 2), and income (the edge with 3). Then conditioning on income, gender, age, we generate Accumulated number of visits (the edges with 4). Finally we generate Converted or not conditioning on income and Accumulated number of visits (edges with 5). A detailed data generation procedure is summarized in the Appendix. In particular, the probability vector for Employment status is (0.3,0.3,0.4) in set A (more people in the morning) and (0.4,0.4,0.3) in set B (more people in the evening), which results in difference in marginal distribution and correlation structure in other dimensions. We control the number of data points to be 800 in either data sets.

\par The data is simulated 100 times. The power for the propensity method, DISCO method, and randomized chi-square test is 0.93, 0.91, and 0.87 respectively. All three methods can detect the multi-dimensional distribution heterogeneity with high power. Again there is a trade-off: randomized chi-square method has comparably lower power but can point out the most imbalanced columns. 

\begin{table}[H]
\begin{center}
\begin{tabular}{ |c|c|| } 
 \hline
 Column names  & Number of times being rejected \\ \hline 
 Area of living & 28 \\ \hline 
 Browser type & 23 \\ \hline 
 Employment status &163 \\ \hline
 gender & 43 \\ \hline
 age & 85 \\ \hline
 income & 44 \\ \hline
 Accumulated number of visits & 50 \\ \hline
 Converted or not & 43\\ \hline
\end{tabular}
\caption{the number of time each column gets rejected in 100 simulations with 10 column sampling times per simulation}  
\label{table2}
\end{center}
\end{table}
Table \ref{table2} displays the number of time each column gets rejected in 200 simulations with $S=10$ (sampling columns 10 times per simulation) and $C=3$ (sampling maximum 3 columns each time). We see Area of living and Browser type have significantly fewer times of being rejected because they are balanced and independent. Employment status has the largest number of sampling times since it is directly affected by the timing of the day, with the rest 5 columns having milder imbalance depending on its correlation with Employment status.

\section{Anonymous real data}
\label{realdata}
We also try our tests on some auto-personalization datasets provided by Adobe Digital Marketing Cloud. Auto-personalization is personalized recommendation to individual customers learned from past customer behavior in the database. Usually a random A/B splitting is needed to serve as baseline to evaluate any recommendation methodology. We obtain such a pair of randomly-split data sets, and test the propensity method, DISCO method, and randomized chi-square test on a data set from the online marketing section of a company. The data is split into set A and B, each of dimension $5055 \times 383$, with 2 to 50 categories per column. For privacy concern, we do not get to know the encodings of the columns and the detailed data collecting procedure. However, propensity score with resampling, DISCO method, and
randomized chi-square tests all reject the hypothesis that the two
distributions are the same. Randomized chi-square also provides some combinations of columns that are the most imbalanced, one of which is columns 196, 57, 248, 260, 271, 342, 374, and 239. These 8 columns have 50 different combinations (for example if one column can take on 2 values, another 3 values, then the two columns together can take on at most $2 \times 3 = 6$ values). Figure \ref{plot5} shows a bar plot of the 15 combinations that have the most counts from both data sets, and a ratio of counts in set B to those in set A. We see the two categorical distributions do differ a great deal, with counts ratio being as high as 3. If we get more information on the these two sets we can do further analysis on these selected columns to identify the source of heterogeneity. 

\begin{figure}[H]
\includegraphics[scale=0.45]{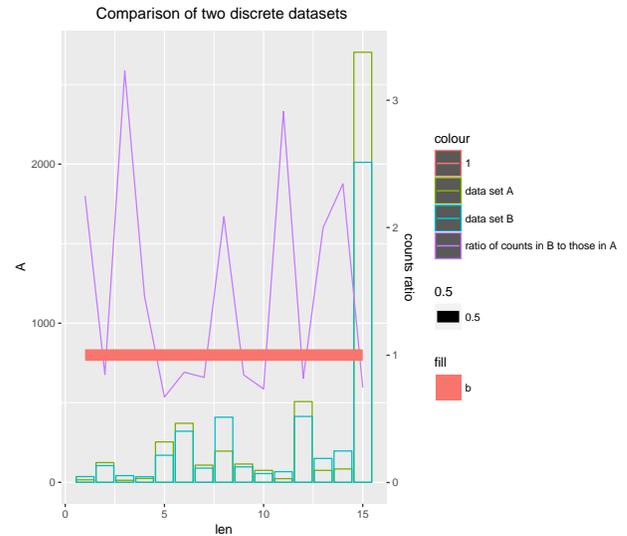}
\caption{comparison of a subset of columns from two categorical data sets, the red bar representing counts from set A, the green bar from set B, purple line denoting ratio of counts in A to counts in B}
\label{plot5}
\end{figure}

\section{Discussion}
\label{discussion}
In summary, A/B testing has a wide range of application in online marketing and other areas, so making sure the test indeed provides valid results is crucial. This paper formulates the validity of A/B testing results as a hypothesis testing problem in multi-dimensional space. We propose three ways to test whether two or more data sets come from the same distribution and each has its own advantage. The propensity score and DISCO methods can both be generalized to continuous data, but randomized chi-square test is only applicable to categorical data. Of course, the simplest way is to categorize the continuous data. In fact, even when the data is one-dimensional normal with the same mean but difference variance, both propensity and DISCO methods have very low power, but categorize each data point $x$ to be the floor of $2x$ and apply randomized chi-square test yields high power.

\par Besides testing equivalence of distribution after the experiments have been carried out, we also hope to make sure the online experiments are correctly carried out from the beginning. This is a meaningful and promising area of research that involves multivariate analysis, multiple testing, feature selection, observational data and sequential analysis. There has been active research going on in some parts of this big area, including the papers we mentioned in Section 2. We also need to work closely with data engineers to effectively implement these methodology.


\appendix
\section*{Appendix}
\subsection{Real-world simulation data generation process}
\begin{itemize}
\item 
Generate Area of living $\sim$ Multinomial ((0.33,0.33,0.34)).
\item
Generate Browser type $\sim$ Multinomial ((0.3,0.3,0.2,0.15,0.05)).
\item
For data set A, generate Employment status $\sim$ Multinomial ((0.3,0.3,0.4)); for data set B, generate Employment status $\sim$ Multinomial ((0.4,0.3,0.3)).
\item
gender $\bigg|$ Employment status $= 0$ $\sim$ Multinomial  ((0.5,0.5)); gender $\bigg|$ Employment status $= 1$ $\sim$ Multinomial  ((0.6,0.4)); gender $\bigg|$ Employment status $= 2$ $\sim$ Multinomial  ((0.3,0.7)).
\item
age $\bigg|$ Employment status $= 0$ $\sim$ Multinomial  ((0.7,0.2,0.05,0.03,0.02)); age $\bigg|$ Employment status $= 1$ $\sim$ Multinomial  ((0.1,0.2,0.3,0.3,0.1)); age $\bigg|$ Employment status $= 2$ $\sim$ Multinomial  ((0.1,0.02,0.04,0.04,0.8)).
\item
income $\bigg|$ Employment status $= 0$ $\sim$ Multinomial  ((0.7,0.2,0.1,0)); income $\bigg|$ Employment status $= 1$ $\sim$ Multinomial  ((0.2,0.3,0.3,0.2)); income $\bigg|$ Employment status $= 2$ $\sim$ Multinomial  ((0.8,0.15,0.05,0)).
\item
ANOV (Accumulated number of visits) $\bigg|$ (income $\leq 2$ \& age $\geq 3$) $\sim$ Multinomial  ((0.7,0.15,0.1,0.05)); ANOV $\bigg|$ (income $\leq 1$ \& gender $=0$ \& age $\leq 2$) $\sim$ Multinomial  ((0.5,0.2,0.2,0.1)); ANOV $\bigg|$ (income $\geq 3$ \& gender $=1$ \& age $> 2$) $\sim$ Multinomial  ((0.15,0.2,0.25,0.4)); ANOV $\bigg|$ other combinations of age, gender and income $\sim$ Multinomial ((0.25,0.25,0.25,0.25)).
\item
Conv (converted or not) $\bigg|$ (income $< 2$ \& ANOV $< 2$) $\sim$ Multinomial  ((0.9,0.1)); Conv $\bigg|$ (income $> 2$ \& ANOV $< 2$) $\sim$ Multinomial  ((0.85,0.15)); Conv $\bigg|$ (income $> 2$ \& ANOV $> 2$) $\sim$ Multinomial  ((0.65,0.35)); Conv $\bigg|$ (income $< 2$ \& ANOV $> 2$) $\sim$ Multinomial ((0.8,0.2)).
\end{itemize}



\end{document}